\newcolumntype{Y}{>{\RaggedRight\arraybackslash}X} 
\theoremstyle{plain}
\newtheorem{thm}{Theorem}
\newtheorem{lem}{Lemma}
\theoremstyle{definition}
\newtheorem{dfn}{Definition}
\newtheorem{asm}{Assumption}
\newcommand\given[1][]{\:#1\vert\:}
\newcommand\eqa{\mathrel{\overset{\makebox[0pt]{\mbox{\tiny\sffamily (a)}}}{=}}}
\newcommand\eqb{\mathrel{\overset{\makebox[0pt]{\mbox{\tiny\sffamily (b)}}}{=}}}
\newcommand\aprc{\mathrel{\overset{\makebox[0pt]{\mbox{\tiny\sffamily (c)}}}{\approx}}}
\newcommand\aprd{\mathrel{\overset{\makebox[0pt]{\mbox{\tiny\sffamily (d)}}}{\approx}}}
\newcolumntype{b}{X}
\newcounter{tempEquationCounter}
\newcounter{thisEquationNumber}
\begin{document}
\title{A Blockage Model for the Open Area Mm-wave Device-to-Device Environment} 
\author
{\IEEEauthorblockN{Swapnil Mhaske\IEEEauthorrefmark{1}, Predrag Spasojevic\IEEEauthorrefmark{1}, Ahsan Aziz\IEEEauthorrefmark{2}}
\IEEEauthorblockA{\IEEEauthorrefmark{1}Wireless Information Network Laboratory, Rutgers University, New Brunswick, NJ, USA \\
Email:\{smhaske, spasojev\}@winlab.rutgers.edu}
\IEEEauthorblockA{\IEEEauthorrefmark{2}National Instruments Corporation, Austin, TX, USA \\
Email:\{ahsan.aziz\}@ni.com}}
\maketitle

\begin{abstract}
A significant portion of the \num{5}\textsuperscript{th} generation of wireless networks will operate in the mm-wave bands. One of the several challenges associated with mm-wave propagation is to overcome shadowing due to signal blockage caused by environmental objects. Particularly susceptible are nodes in a device-to-device network that typically operate at low power and in a blockage prone environment such as crowded open areas. In this work, we provide an insight into the effect of blockages on the signal quality for an open area device-to-device scenario. We propose a blockage model based on the homogeneous Poisson Point Process. The model provides the average signal attenuation as a soft metric that quantifies the extent of blockage. This not only indicates whether the signal is blocked but also measures how much the signal is attenuated due to one or more blockers. The analytical results are confirmed with the help of Monte Carlo simulations for real-world blocker placement in the environment.
\end{abstract}

\begin{IEEEkeywords}
mm-wave, 5G, blockage, D2D, mMTC, 3GPP.
\end{IEEEkeywords}

\section{Introduction}
\label{sec:intro}
One of the differentiating characteristics of the next generation of wireless networks is its operation in the mm-wave bands. Technology developers aim to accomplish data rates of up to ten gigabits per second with an overall latency of less than 1ms \cite{raaf:5g}, \cite{cudak:5g} by leveraging the abundant bandwidth in the mm-wave spectrum \cite{khan:mmw_spectrum}. However, relative to the contemporary cellular spectrum, this comes with two major challenges. Firstly, mm-wave signals suffer from a greater propagation loss \cite{khan:mmw_spectrum}, \cite{rappaport:mmw_5g}. This can be addressed by beamforming to provide directional gains. Secondly, mm-waves are susceptible to penetration losses due to commonly found materials such as building materials, humans, vehicles and foliage \cite{rappaport:mmw_book}. Such a loss in the signal quality due to environmental objects is collectively referred to as the \emph{blockage} effect. Methods to alleviate this problem such as, beam adaptation and rerouting, relaying and non-line-of-sight (NLOS) communication are topics of active research. \\
\indent Link outage due to the blockage effect can impact several mechanisms such as beam setup, beamtracking, adaptive modulation, and coding [1]. Restoring the link quality may become prohibitively expensive in such scenarios, especially for the 5G target of \SI{1}{\milli\second} latency. To analyze outages due to blockages, it is crucial to model the blockage effect due to environmental objects. Studies to model blockages can be based on real world measurements (e.g. \cite{maccartney:hmn_blk}), ray tracing simulations (e.g. \cite{jacob:stc_hmn_blk_rtr}), and stochastic methods (applied in this work). Recent notable works that have developed stochastic models for blocker placement in the environment of operation are \cite{bai:blk_mod}, \cite{gapeyenko:anl_hmn_blk}. In \cite{bai:blk_mod}, authors use random shape theory to model the effect of randomly placed blockers in the environment. The line-of-sight (LOS) probability is derived and it is concluded that, by increasing the base station density the adverse effects of blockages can be alleviated. A detailed and insightful model for randomly distributed human presence around the receiver is provided in \cite{gapeyenko:anl_hmn_blk}. Authors propose a tractable model for human blockage and show that the blockage probability increases with the blocker population density and with the transmitter-receiver distance.  \\
\indent The presence of  blockers around the receiver directly depends on the spatial geometry of  the environment. For instance, human blockers may be found at random locations in a park. Whereas in the case of modeling blockages due to a street, humans are restricted to the sidewalk. It is not hard to see that a generic blockage model that can be applied to any environment is unrealizable. In this analysis, we focus on a scenario where blockers are randomly placed around the receiver in an open area environment. Such an environment is commonly found in application scenarios that fall under the 3GPP use case of massive Machine Type Communication (mMTC) \cite{std:tr_38913_v15_201806} and direct device-to-device (D2D) communication \cite{ansari:mmwave_d2d}, \cite{qiao:mmwave_d2d}, \cite{boccardi:disr_5Gtech} envisioned for 5G. Examples of such deployments are public safety networks (for LTE \cite{std:tr_36843_v12_201403}), warehouse and industrial robotics, wearable technology, and tactical networks. \\
\indent While most work in the area of modeling blockages focuses on determining the probability of LOS, we obtain a soft metric for the signal attenuation in the presence of blockages. We derive a tractable analytical upper bound on the signal attenuation due to blockages as a function of model parameters, namely, the spatial geometry, the blocker population, and the penetration loss of the blocker. To obtain this we leverage a simple stochastic geometry process: the two-dimensional homogeneous Poisson Point Process (PPP). The probability of link failure is derived based on the blockage model to provide modulation and coding design insights for operation in typical deployment scenarios. \\
\indent This paper is organized as follows. The spatial setup for modeling the blockage effect is described in Section \ref{sec:spl_set}. The derivation of the soft measure: the expected attenuation due to blockages is provided in Section \ref{sec:atn_blk}. The probability of outage as a function of the model parameters is subsequently derived in Section \ref{sec:otg_anl}. The theoretical and simulated performance along with insights into system design is the subject of Section \ref{sec:num_res_dis}. Concluding remarks on the work are presented in Section \ref{sec:conc}. \\
\indent Throughout the paper, random variables are denoted by upper case letters and the corresponding realizations are denoted by their lower case counterparts. $P_B(b)$ denotes the probability mass function (PMF) of the discrete random variable $B$. The cumulative distribution function (CDF) and the probability distribution function (PDF) of the continuous random variable $X$ is denoted by $F_X(x)$ and $f_X(x)$, respectively. $\mathcal{U}(a,b)$ denotes the uniform distribution over the interval $[a,b]$. $\phi(\cdot)$ denotes the PDF of the standard normal distribution, and $\Phi(\cdot)$ denotes its CDF. $\mathbb{E}(\cdot)$ is the expectation operator.

\section{Spatial Setup}
\label{sec:spl_set}
In this section we describe the spatial setup used to model blockages around the mm-wave link. To illustrate that the blocker presence can be approximated in a random manner, Fig. \ref{fig:spl_set} (left) depicts an open area D2D scenario. Such a scenario is typically found at town squares, transportation terminals, shopping malls, and parks.
\begin{figure*}
    \centering
        \includegraphics[width=\textwidth]{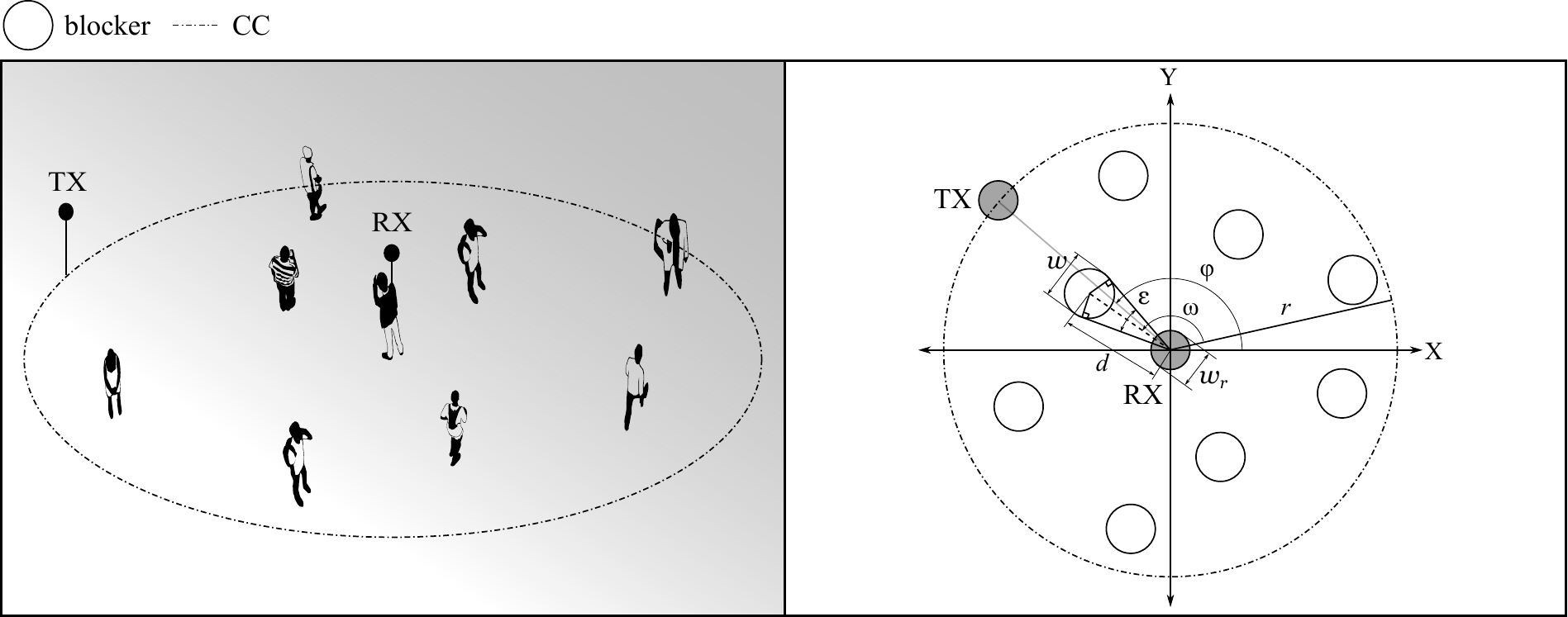}
        \caption{(Left) An example D2D scenario where the receiver (human with a mobile phone) associated with a transmitter (kiosk, sensor, or another mobile phone) is surrounded by human blockers. (Right) Azimuth plane geometry of the scenario on the left. Cylindrical blockers are randomly found inside the communication circle (CC) around the receiver which is located at the center of the CC. The receiver is blocked by a blocker along the direction $\phi$.}
    \label{fig:spl_set}
\end{figure*}

\begin{asm}
\label{asm:geo_sz}
	The size of the geometry under consideration is defined by the communication range of the receiver. This can be expressed by a circle of radius $r$ referred to as the communication circle (CC). As shown in Fig. \ref{fig:spl_set} (right) the receiver is at the center of the CC and the transmitter at its edge.
\end{asm}

To model the blockage effect we consider only those objects in the environment that can block the received signal in any direction. Such objects are termed as \emph{blockers}. Note that once the receiver is blocked in an arbitrary direction, it is assumed that it is fully blocked in the zenith. Thus, the height of the blocker is not modeled in the analysis that follows. We model blockers as cylinders \cite{jacob:hmn_blk_cyl} of diameter $w$. Letting $w_r$ denote the diameter of the cylindrical receiver and  $s \geq \frac{w+w_r}{2}$ denote the distance of the nearest blocker from the receiver. The location of a blocker is defined by the polar coordinates $(d,\omega) \in [s,r] \times (0, 2\pi]$. From \cite{lewis:ppp_sim} we know that for a homogeneous PPP inside the annular region $[s,r] \times (0, 2\pi]$, 
\begin{align}
	F_{D|R}(d \given r) &= \frac{d^2-s^2}{r^2-s^2} \quad d \in [s, r]
	\label{eq:cdf_D_R}
\end{align}
and $\Omega \sim \mathcal{U}(0,2\pi)$. The location of blockers in the environment is modeled by placing the blockers inside the CC around the receiver as per a two-dimensional homogeneous PPP \cite{haenggi:stoc_geom_book}, \cite{lewis:ppp_sim}. Whereas the blocker population is modeled with the intensity of the PPP $\rho$ which is the expected number of blockers per unit area. Thus, the average number of blockers within the CC is $\overline{\rho}(r)=\rho \times \pi r^2$. Also, the probability of having $M=m$ blockers around the receiver inside the CC is 
\begin{align}				
		P_{M|R}(m \given r)=\frac{\overline{\rho}^m(r)}{m!} e^{-\overline{\rho}(r)}, \quad m=0,1,\ldots, \infty
		\label{eq:pmf_M_R}
\end{align}

\section{Attenuation due to Blockages}
\label{sec:atn_blk}
Here, the focus is to determine the effect of blockages in terms of the attenuation caused to the received signal. Blocker material is characterized by the penetration loss that it can cause to a mm-wave signal. For instance, at \SI{15}{\giga\hertz} up to \SI{10}{\decibel} loss due to vehicles is reported in \cite{okvist:blk_chr}. Whereas, in \cite{maccartney:hmn_blk} the loss due to human blockage at \SI{73}{\giga\hertz} ranges from \SI{20}{\decibel} to \SI{40}{\decibel}. In this work we model the attenuation due to a blocker as its penetration loss. Based on the above, a closed form expression for the average attenuation along a given direction is presented in the following. \\
\indent We begin by analyzing the extent of blockage caused by a single blocker. From the geometry shown in Fig. \ref{fig:spl_set} (right) a blocker subtends angle $E=2 \arcsin(w/2D)$ as seen by the receiver at a distance $D$. This leads us to the following lemma. 
\begin{lem}
	For $2 \arcsin(w/2r) \leq \varepsilon \leq 2 \arcsin(w/2s)$,
	 \begin{align}
		f_{E|R}(\varepsilon \given r) &=\frac{w^2}{4 (r^2-s^2)} \frac{\cos(\varepsilon/2)}{\sin^3(\varepsilon/2)}.
	\label{eq:pdf_E_R}
	\end{align}
\end{lem}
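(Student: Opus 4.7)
The plan is to obtain the conditional density of $E$ by a direct one-variable change of variables from the conditional density of $D$, since the relation $E = 2\arcsin(w/2D)$ is a smooth, strictly monotone function of $D$ on $[s,r]$ (assuming $w \le 2s$ so that $\arcsin$ is well-defined).

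First, I would differentiate \eqref{eq:cdf_D_R} to get $f_{D|R}(d\given r) = \dfrac{2d}{r^2 - s^2}$ on $[s,r]$. Next, I would invert the transformation: from $\sin(\varepsilon/2) = w/(2d)$ we get
\begin{equation*}
d(\varepsilon) = \frac{w}{2\sin(\varepsilon/2)},
\end{equation*}
and, since $\arcsin$ is increasing while $w/(2d)$ is decreasing in $d$, the map $d \mapsto \varepsilon$ is strictly decreasing. Hence as $d$ ranges over $[s,r]$, $\varepsilon$ ranges over $[2\arcsin(w/2r),\, 2\arcsin(w/2s)]$, matching the stated support.

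Then I would compute the Jacobian
\begin{equation*}
\left|\frac{d d}{d\varepsilon}\right| = \frac{w\cos(\varepsilon/2)}{4\sin^2(\varepsilon/2)},
\end{equation*}
and apply the standard transformation formula $f_{E|R}(\varepsilon \given r) = f_{D|R}(d(\varepsilon)\given r)\,|dd/d\varepsilon|$. Substituting $d(\varepsilon)$ into $f_{D|R}$ gives $\dfrac{w}{(r^2-s^2)\sin(\varepsilon/2)}$, and multiplying by the Jacobian yields exactly \eqref{eq:pdf_E_R}.

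The calculation is entirely routine; the only thing to watch is the bookkeeping on the support (confirming the endpoints swap because the map is decreasing) and on the absolute value in the Jacobian, so that no extraneous sign appears. There is no genuine obstacle beyond correctly identifying the monotonicity and the inverse map.
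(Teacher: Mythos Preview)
Your proof is correct and follows essentially the same approach as the paper: both obtain $f_{E|R}$ by a one-dimensional change of variables from $D$ to $E$ via $E=2\arcsin(w/2D)$. The only cosmetic difference is that the paper first writes down the CDF $F_{E|R}(\varepsilon\given r)=1-F_{D|R}\bigl(w/(2\sin(\varepsilon/2))\,\big|\,r\bigr)$ and then differentiates, whereas you apply the Jacobian formula directly to the density; the computations are equivalent.
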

\begin{proof}
Since $E=2 \arcsin(\frac{w}{2D})$, 
	\begin{align*}
		F_{E|R}(\varepsilon \given r) &= \Pr\left\lbrace 2 \arcsin\left(\frac{w}{2D}\right)\leq \varepsilon \given[\Big] R \leq r \right\rbrace \\
		&=1-F_{D|R}\left( \frac{w}{2 \tan(\varepsilon/2)} \given[\bigg] r \right) \\
		&\eqa 1-\frac{(w/2 \cdot s \cdot\sin(\varepsilon/2))^2-1}{(r/s)^2-1}
	\end{align*}
where, in (a) we used (\ref{eq:cdf_D_R}) and $2 \arcsin(w/2r) \leq \varepsilon \leq 2 \arcsin(w/2s)$. Differentiating with respect to $\varepsilon$ proves the lemma.
\end{proof}

\begin{dfn}[Cover]
\label{def:cvr}
The receiver is said to have a cover in the direction $\phi \in (0, 2\pi]$ if there exists a blocker located at $(d,\omega)$ such that $\omega-\varepsilon/2 \leq \phi \leq \omega+\varepsilon/2$, where $\varepsilon=2 \arcsin(w/2d)$.
\end{dfn}

\subsection{Probability of a Blockage}
\label{ssec:p_blk}
Consider that a blocker is located at $(d,\omega)$ in the CC of radius $r$. The probability that it blocks the received signal arriving from the direction $\phi$ is given by
\begin{align}
	h(r,\varepsilon) &= \Pr \left\lbrace \Omega-E/2 \leq \phi \leq \Omega+E/2 \given E=\varepsilon\right\rbrace \nonumber \\
	&\eqa \Pr \left\lbrace -E/2 \leq \Omega \leq E/2 \given E=\varepsilon \right\rbrace \nonumber \\
	&= F_\Omega(\varepsilon/2)-F_\Omega(-\varepsilon/2) \nonumber \\
	&\eqb \varepsilon/2\pi
	\label{eq:g_r_eps}
\end{align}
where, in (a) we have assumed $\phi=0^\circ$ since, the event $\left\lbrace \Omega-E/2 \leq \phi \leq \Omega+E/2 \right\rbrace$ is independent of $\phi$. Also, in (b) we have used the fact that $\Omega \sim \mathcal{U}(0,2\pi)$. Let $k_r:=w/2r$ and $k_s:=w/2s$. Then, the overall probability of a blocker in the CC of radius $r$ causing a cover along an arbitrary direction is
\begin{align}
	g(r) =& \int_{2\arcsin(k_r)}^{2\arcsin(k_s)} \! h(r,\varepsilon) f_{E|R}(\varepsilon \given r) \, \mathrm{d}\varepsilon \nonumber \\ 
		    \eqa{}& \frac{1}{2\pi} \int_{2\arcsin(k_r)}^{2\arcsin(k_s)} \! \varepsilon f_{E|R}(\varepsilon \given r) \, \mathrm{d}\varepsilon \nonumber \\
	\begin{split}
		\eqb{}& \frac{w^2}{8\pi (r^2-s^2)} \left[ \frac{2\arcsin(k_r)}{k_r^2} + 2 \left(\frac{1}{k_r^2}-1 \right)^{1/2} \right. \\
		&\left. - \frac{2\arcsin(k_s)}{k_s^2} - 2 \left(\frac{1}{k_s^2}-1 \right)^{1/2} \right].
	\end{split}
	\label{eq:g_r}
\end{align}
where, in (a) we have used (\ref{eq:g_r_eps}) and in (b) we have used (\ref{eq:pdf_E_R}). In the following theorem we provide the average attenuation caused to the received signal along $\phi$ due to multiple covers.

\subsection{Expected Attenuation}
\label{ssec:exp_atn}
Let $\zeta$ be the attenuation due to a blocker. Supposing that there are $M=m$ blockers in the CC, let $N=n \in [0,m]$ be the number of covers along the direction $\phi$. Based on the fact that the blocker locations $(D,\Omega)$ are i.i.d. for all the $m$ blockers we have,
\begin{align}
	P_{N|M,R}(n \given m,r) &= \binom{m}{n} g^n(r) \left[1-g(r)\right]^{m-n}.
\label{eq:p_nmr}
\end{align}
Let $A$ denote the attenuation of the received signal along $\phi$. Then, $A=\zeta^{N}$. 
\begin{thm}[Expected Attenuation]
Let the receiver be placed at the center of the CC of radius $r$. Let cylindrical blockers of diameter $w$ be placed around the receiver according to a two-dimensional homogeneous PPP with a sufficiently large intensity $\rho$. Then, the expected attenuation of a received signal arriving from the direction $\phi \in (0,2\pi]$ is given as 
	\begin{align*}
		\overline{A}(r) &:= \mathbb{E} \left[ A \given R=r \right] \approx e^{-\overline{\rho}(r)\left(1-e^{-g(r)(1-\zeta)} \right)}
	\end{align*}
	where, $\overline{\rho}(r)=\rho \times \pi r^2$ is the expected number of blockers inside the CC and $g(r)$ is the probability of a blocker causing a cover in the direction $\phi$.
\label{thm:exp_atn}
\end{thm}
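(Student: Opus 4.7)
The plan is to compute $\overline{A}(r) = \mathbb{E}[\zeta^N \mid R = r]$ via the tower property, conditioning first on $M$ and then averaging over the Poisson distribution of $M$ given in (\ref{eq:pmf_M_R}). For the inner step I would use (\ref{eq:p_nmr}): since $N$ conditioned on $M=m$ is Binomial$(m, g(r))$, the conditional expectation is just a probability generating function evaluated at $\zeta$, and the binomial theorem collapses it to
\[
\mathbb{E}[\zeta^N \mid M = m, R = r] = \sum_{n=0}^m \zeta^n \binom{m}{n} g(r)^n (1-g(r))^{m-n} = \bigl(1 - g(r)(1-\zeta)\bigr)^m.
\]

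To recover the exact form stated in the theorem I would then invoke the ``sufficiently large intensity $\rho$'' hypothesis together with the observation that blockers are physically narrow, so that $g(r)$ (of order $w/r$ when $w \ll r$) is small. This justifies the Poisson-type approximation $(1 - g(r)(1-\zeta))^m \approx \exp(-m g(r)(1-\zeta))$. Substituting this into the outer Poisson sum gives
\[
\overline{A}(r) \approx e^{-\overline{\rho}(r)} \sum_{m=0}^\infty \frac{\bigl(\overline{\rho}(r)\, e^{-g(r)(1-\zeta)}\bigr)^m}{m!} = e^{-\overline{\rho}(r)} \exp\!\bigl(\overline{\rho}(r)\, e^{-g(r)(1-\zeta)}\bigr),
\]
and merging the two exponential prefactors produces $\exp(-\overline{\rho}(r)(1 - e^{-g(r)(1-\zeta)}))$, as required.

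The main obstacle is justifying precisely that one approximation, because the ``honest'' calculation that keeps $(1 - g(r)(1-\zeta))^m$ inside the Poisson sum collapses exactly to $\exp(-\overline{\rho}(r) g(r)(1-\zeta))$, which is merely the first-order Taylor expansion, in $g(r)(1-\zeta)$, of the theorem's expression. I would therefore include a short remark noting that replacing $(1-x)^m$ by $e^{-mx}$ introduces a multiplicative factor of order $e^{-mx^2/2}$, and that for $m$ concentrated near $\overline{\rho}(r)$ we have $m g(r)^2 = O(\rho w^2)$, which is small in the intended open-area regime where blockers occupy a small fraction of the communication disk. All remaining ingredients, namely the binomial-theorem collapse and the recognition of the series as an exponential, are routine and should occupy at most a line each.
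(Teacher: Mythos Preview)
Your argument is correct, but the route differs from the paper's. You collapse the inner sum \emph{exactly} via the binomial theorem to $(1-g(r)(1-\zeta))^m$ and then make a single approximation $(1-x)^m\approx e^{-mx}$ before summing over the Poisson weights. The paper instead makes two approximations inside the double sum: first it replaces the binomial pmf by a Poisson pmf with mean $mg(r)$ (the Poisson limit theorem), and then it replaces the truncated exponential series $\sum_{n=0}^{m}[\zeta mg(r)]^n/n!$ by $e^{\zeta mg(r)}$. Both paths land on the same intermediate expression $e^{-mg(r)(1-\zeta)}$ before the outer Poisson sum, so the final calculation coincides from that point on. Your version is the cleaner of the two, since it needs only one approximation and makes its error term transparent; it also exposes a fact the paper's derivation obscures, namely that the \emph{exact} value of $\overline{A}(r)$ is the simpler closed form $\exp\!\bigl(-\overline{\rho}(r)g(r)(1-\zeta)\bigr)$, of which the theorem's stated expression is only a small-$g(r)$ perturbation.
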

\begin{proof}
The expected attenuation due to $m$ blockers inside the CC is given by,
	\begin{align}
		\mathbb{E}\left[A \given[\Big] M=m, R=r \right] &= \sum_{n=0}^{m} P_{N|M,R}(n \given m,r) \zeta^n.
	\end{align}
Applying the law of total probability we get,
\begin{align}
	\overline{A}(r) &= \sum_{m=0}^{\infty} \sum_{n=0}^{m} P_{N|M,R}(n \given m,r) \zeta^n P_{M|R}(m \given r) \nonumber \\
	&\eqb \sum_{m=0}^{\infty} \sum_{n=0}^{m} \binom{m}{n} g^n(r) \left(1-g(r)\right)^{m-n} \zeta^n \frac{\overline{\rho}^m(r)}{m!} e^{-\overline{\rho}(r)} \nonumber \\
	&\aprc \sum_{m=0}^{\infty} \sum_{n=0}^{m} \frac{\left(mg(r)\right)^n}{n!} e^{-mg(r)} \zeta^n \frac{\overline{\rho}^m(r)}{m!} e^{-\overline{\rho}(r)} \nonumber \\
	&= \sum_{m=0}^{\infty} e^{-mg(r)} \sum_{n=0}^{m} \frac{\left(\zeta m g(r) \right)^n}{n!} \frac{\overline{\rho}^m(r)}{m!} e^{-\overline{\rho}(r)} \nonumber \\
	&\aprd \sum_{m=0}^{\infty} e^{-mg(r)(1-\zeta)} \frac{{\overline{\rho}}^m(r)}{m!} e^{-\overline{\rho}(r)} \nonumber \\
	&= \sum_{m=0}^{\infty} \frac{\left(\overline{\rho}(r) e^{-g(r)(1-\zeta)}\right)^m}{m!} e^{-\overline{\rho}(r)} \nonumber \\
	&= e^{\left(\overline{\rho}(r) e^{-g(r)(1-\zeta)} \right)} e^{-\overline{\rho}(r)} \nonumber \\
	&= e^{-\overline{\rho}(r)\left(1-e^{-g(r)(1-\zeta)}\right)}
\label{eq:g_a}
\end{align}
where, in (b) we have used (\ref{eq:pmf_M_R}) and (\ref{eq:p_nmr}). In (c) we apply the Poisson limit theorem \cite{papoulis:prob_book} assuming sufficiently large values of $\rho$ (and consequently $m$) and sufficiently small values of $g(r)$ giving
$\binom{m}{n} g^n(r) [1-g(r)]^{m-n} \approx \frac{[mg(r)]^n}{n!} e^{-mg(r)}$. In (d) as a consequence of assuming sufficiently large values of $m$, we apply the Taylor series approximation to get $\sum_{n=0}^{m} \frac{[\zeta mg(r)]^n}{n!} \approx e^{\zeta mg(r)}$.
\end{proof}

\section{Numerical Results and Discussion}
\label{sec:num_res_dis}
The performance of the analytical results with respect to the Monte Carlo simulations is provided in this section. First, we discuss the attenuation result and subsequently assess its impact on the outage probability. We also provide an insight into the selection of the modulation and coding schemes (MCS) with respect to the model parameters. \\
\indent To determine the expected attenuation we perform a \num{e5} Monte Carlo simulation per $\rho$ value. In each trial, blockers are dropped as per the PPP and the resulting cover around the receiver is calculated. 
\begin{figure}
    \centering
        \includegraphics[width=\columnwidth]{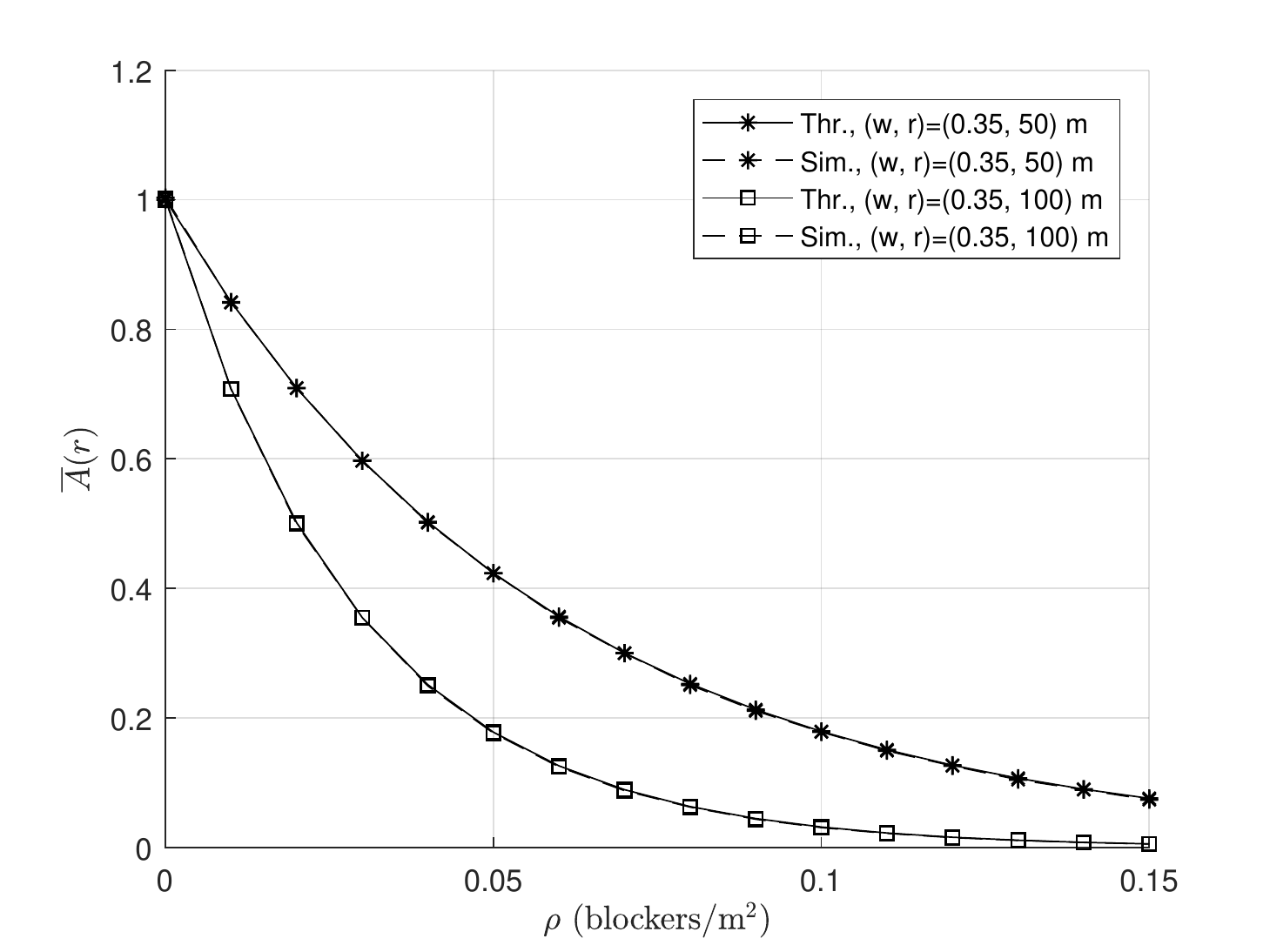}
        \caption{Comparison of the theoretical and simulated attenuation for various geometry sizes with $\zeta=\SI{-20}{\decibel}$.}
    \label{fig:ga}
\end{figure}
Fig. \ref{fig:ga} shows the theoretical and simulated values of the average attenuation for various geometry sizes ($r$) per blocker density. The attenuation $\zeta$ is set to a value of $\SI{-20}{\decibel}$ which is as per the human blockage measurements provided in \cite{maccartney:hmn_blk}. Note that, \cite{maccartney:hmn_blk} reports losses upto $\SI{-40}{\decibel}$, so the results shown here tend towards a best case loss scenario. The theoretical approximation closely follows the simulation. As expected, the attenuation worsens as the blocker population increases. 

\section{Conclusion}
\label{sec:conc}
The proposed blockage model quantifies the blockage effect in terms of a continuous metric: the signal attenuation that accounts for varying levels of blockage. The model holds for an open area mm-wave D2D environment. 

\section*{Acknowledgment}
The authors would like to thank the Department of Electrical \& Computer Engineering, Rutgers University for their continual support for this research work and our industry partner National Instruments Corporation for their valuable feedback and support.

\bibliographystyle{IEEEtran}
\bibliography{IEEEabrv,blk_mod_ba}

\end{document}